\newtheorem{theorem}{Theorem}
\begin{document}

\title{Target Defense Against Link-Prediction-Based Attacks via Evolutionary Perturbations}

\author{Shanqing Yu, Minghao Zhao, Chenbo Fu, Huimin Huang, Xincheng Shu, Qi Xuan,~\IEEEmembership{Member,~IEEE}, \\and Guanrong Chen,~\IEEEmembership{Fellow,~IEEE}
\IEEEcompsocitemizethanks{
\IEEEcompsocthanksitem S. Yu, M. Zhao, C. Fu, H. Huang, X. Shu, and Q. Xuan are with the College of Information Engineering, Zhejiang University of Technology, Hangzhou 310023, China (e-mail: yushanqing@zjut.edu.cn, yzbyzmh1314@163.com, cbfu@zjut.edu.cn,  saffronHuang@163.com, sxc.shuxincheng@foxmail.com, xuanqi@zjut.edu.cn).
\IEEEcompsocthanksitem G. Chen is with the Department of Electronic Engineering, City University of Hong Kong, Hong Kong SAR, China (e-mail: eegchen@cityu.edu.hk).
\IEEEcompsocthanksitem This article has been submitted on June 27th, 2018.
\IEEEcompsocthanksitem Corresponding author: Qi Xuan.
}
}

\IEEEtitleabstractindextext{
\begin{abstract}
In social networks, by removing some target-sensitive links, privacy protection might be achieved.
However, some hidden links can still be re-observed by link prediction methods on observable networks.
In this paper, the conventional link prediction method named Resource Allocation Index (RA) is adopted
for privacy attacks. Several defense methods are proposed, including heuristic and evolutionary approaches,
to protect targeted links from RA attacks via evolutionary perturbations. This is the first time to study
privacy protection on targeted links against link-prediction-based attacks. Some links are randomly selected from the network as targeted links
for experimentation. The simulation results on six real-world networks demonstrate the superiority of
the evolutionary perturbation approach for target defense against RA attacks. Moreover, transferring
experiments show that, although the evolutionary perturbation approach is designed to against RA attacks,
it is also effective against other link-prediction-based attacks.
\end{abstract}

\begin{IEEEkeywords}
social network, link prediction, resource allocation index, target defense, evolutionary algorithm,  transferability.
\end{IEEEkeywords}}

\maketitle

\IEEEdisplaynontitleabstractindextext

\IEEEpeerreviewmaketitle

\section{Introduction}

\IEEEPARstart{M}{any} complex systems in the real world can be represented by networks, where the nodes
denote the entities in the real systems and links capture various relationships among them~\cite{newman2003structure,Boccaletti2006Complex,strogatz2001exploring,jeong2001lethality}.

In network science, \emph{link prediction\/} is a fundamental notion, which attempts to uncover missing
links or to predict future interactions between pairwise nodes, based on observable links and other
external information. Link prediction has various applications in many fields, e.g., link prediction
itself can serve for network reconstruction~\cite{Guimer2009Missing}, evaluating network evolving
mechanism~\cite{Chengdu2011Emergence} and node classification~\cite{Gallagher2008Using}. Besides,
in protein-protein interaction networks, link prediction can serve to guide the design of experiments
to find previously undiscovered interactions~\cite{Mering2005STRING}. With the boom of social media
in recent years, service providers can leverage social networks and other information to recommend
friends, commodities and advertisement that may attract
users~\cite{lu2012recommender,resnick1997recommender,chen2005link,song2009scalable,cukierski2011graph}.
This process can also be naturally described as link prediction.

While various social media are enriching people's lives~\cite{Wu2014Data,xuan2018modern,Xuan2018Social}, there is an increasing concern about privacy
issues since more and more personal information could be obtained by others online. In the field of
social privacy protection, many algorithms have been developed for protecting the privacy of users,
such as identity, relationship and attributes, from different situations in which different public
information was exposed to adversaries~\cite{Liu2008Towards,Cheng2010K,Tai2011Privacy,zheleva2008preserving}.
In this paper, the focus is on preserving link privacy in social networks.

In retrospect, Zheleva \textit{et al.}~\cite{zheleva2008preserving} proposed the concept of link
re-identification attack, which refers to inferring sensitive relationships from anonymized network data.
If the sensitive links can be identified by the released data, then this means privacy breach.
In early days, link perturbation was considered a common technique to preserve sensitive links
\cite{hay2007anonymizing}, e.g., a data publisher can randomly remove real links and insert fake links
into an original network so as to preserve the sensitive links from being identified.
Zheleva \emph{et al.}~\cite{zheleva2008preserving} assumed that the adversary has an accurate
probabilistic model for link prediction, and they proposed several heuristic approaches to anonymizing
network data. Ying \emph{et al.}~\cite{Ying2008Randomizing} investigated the relationship between the
level of link randomization and the possibility to infer the presence of a link in a network. Further,
Ying \emph{et al.}~\cite{ying2009link} investigated the effect of link randomization on protecting
privacy of sensitive links, and they found that similarity indices can be utilized by adversaries to
significantly improve the accuracy in predicting sensitive links.

Differing from the above approaches, Fard \textit{et al.}~\cite{fard2012limiting} assumed that all
links in a network are sensitive, and they proposed to apply subgraph-wise perturbations onto a
directed network, which randomize the destination of a link within some subnetworks thereby limiting
the link disclosure. Furthermore, they proposed neighborhood randomization to probabilistically randomize
the destination of a link within a local neighborhood on a network~\cite{fard2015neighborhood}.
It should be noted that both subnetwork-wise perturbation and neighborhood randomization perturb
every link in the network based on a certain probability.

As discussed above, link prediction can be applied to predict the potential relationship between
two individuals. From another perspective, it may also increase the risk of link disclosure.
Even if the data owner removes sensitive links from the published network dataset, it may still be
disclosed by link prediction and consequently lead to privacy breach. Michael \emph{et al.}~\cite{fire2013links}
presented a \emph{link reconstruction attack}, which is a method that attacker can use link prediction
to infer a user's connections to others with high accuracy, but they did not mention how to defend
the so-called link-reconstruction attack. Naturally, one can consider finding a way to prevent the
link-prediction attack. Since link-reconstruction attack or link-prediction-based attack aims to
find out some real but unobservable links, the defense of link-prediction-based attacks is also
\textbf{target-directed}, which means that one has to preserve the targeted links from being predicted.
In the literature, most existing approaches on link prediction are based on the similarity between
pairwise nodes under the assumption that the more similar a pair of nodes are, the more likely
a link exists between them.

In sociology, \emph{triadic closure} is a popular concept to explain the evolution of a social network,
i.e., if the triad of three individuals is not closed, then the person connected to the other two
individuals would like to close this triad in order to achieve a closure in the relationship
network~\cite{Rapoport1953Spread}. Here, \emph{triadic closure} breeds several neighborhood-based
indices to measure the proximity of pairwise nodes. Zhou \textit{et al.}~\cite{zhou2009predicting}
compared several neighborhood-based similarity indices on disparate networks, and they found that
\textit{Resource Allocation} (RA) index behaves best. Recently, L\"u \textit{et al.} ~\cite{L2010Link}
published a survey on link prediction. In general, since many link prediction algorithms are designed
based on network structures, a data owner can add perturbations into the original network to reduce
the risk of targeted-link disclosure due to link-prediction-based attacks.

Note that, since a major goal of publishing social network data is to pursue useful and worthy
research, one needs to limit the level of link perturbation. This reveals that the key of defending
link-prediction-based attacks depends on how to find \textbf{Optimal Link Perturbation (OLP)},
which takes the defense effect and the level of perturbation into account, to invalid link prediction
algorithms, so that the removed sensitive links are hard to be correctly predicted. However, to the
best of our knowledge, there are very few studies on this perspective of conducting target defense
against link-prediction-based attacks. In this paper, the proposed approach is to randomly select some
links as sensitive ones, assuming that the adversary employs the RA index to predict the missing links.
Several approaches are proposed, including heuristic and evolutionary perturbations, to defend the
attacks. It will be further shown that this kind of defense can also serve as a counterpart of link
prediction to evaluate the robustness or vulnerability of various link prediction algorithms.

The main contributions of this paper are summarized as follows:
\begin{itemize}
\item A novel target defense method against link-prediction-based attacks is proposed to hide the
sensitive links from being predicted.
\item Techniques using heuristic as well as evolutionary perturbations to defend link-prediction-based
attacks are developed. With evolutionary perturbations, a new fitness function is designed, taking into
consideration of both precision and AUC (area under curve), which can reduce the calculation of
fitness values by computing the variation after perturbations. The experimental results demonstrate
the superiority of evolutionary perturbations, especially using an EDA (estimation of distribution
algorithm), and it is found that the transfer effect is closely related to the fitness.
\end{itemize}

The rest of the paper is organized as follows. In Sec.~\ref{sec:1}, the network model and some
standard metrics about link prediction are introduced. The proposed target defense against
link-prediction-based attacks is discussed. In Sec.~\ref{sec:2}, several methods are developed,
including heuristic and evolutionary perturbations. In Sec.~\ref{sec:3}, empirical experiments are
conducted to examine the proposed methods, and the experimental results are analyzed in detail.
In Sec.~\ref{sec:4}, the paper is concluded with some discussions on future directions of research.

\section{Network model and performance metrics}\label{sec:1}

\subsection{Network Model}

An undirected network is presented by $G(V, E)$, where $V$ and $E$ denote the sets of nodes and links,
respectively. Define the set of all node pairs in the network by $\Omega=\{(i,j)|\{i,j\}\in V,i\ne j\}$.
In an undirected network, link $(i,j)$ is considered the same as link $(j,i)$ and $\Omega$ contains
$\binom{k}{2} = \frac{{|V|\left( {|V| - 1} \right)}}{2}$ possible links. All observable links $E$ are
divided into two groups: the training set $E^T$ and the validation set $E^V$, where $E^T\cup E^V=E$ and
$E^T\cap E^V=\emptyset$. Furthermore, define the set of unknown node pairs and the set of non-existent
node pairs by ${U} = \Omega  - {E^T}$ and ${N} = \Omega  - E$, respectively. Obviously, ${U} = {N} +E^V$.

\subsection{Link Prediction}

In link prediction, the objective is to predict $E^V$ within $U$ utilizing the structure information
provided by $E^T$.
There are several metrics to measure the similarity of node pairs, among which local indices are
computationally efficient and well-performed. According to~\cite{zhou2009predicting} and~\cite{Fu2018Link},
the RA index stands out from several local similarity-based indices. It is defined as follows:
\begin{equation}
RA_{xy} = \sum\limits_{z \in \Gamma \left( x \right) \cap \Gamma \left( y \right)} {\frac{1}{{{d_z}}}}\,,
\end{equation}
where $\Gamma(x)$ denotes the one-hop neighbors of node $x$ and $d_z$ denotes the degree of node $z$.
To reduce the RA value of a certain node pair, one can decrease the number of common neighbors between them
or increase the degree of their common neighbors.

In this paper, two standard metrics are used for measuring the performance of link prediction, i.e.,
precision and AUC (the area under the receiver operating characteristic curve), which are defined as follows:
\begin{itemize}	
	\item \textbf{Precision}
	\\ Precision is the ratio of real missing links to predicted links. To calculate it, first sort node pairs in $U$ in
descending order according to their similarity values, as $\hat U$, and then the top-$k$ node pairs
${\hat U}^k $ in ${\hat U} $ are chosen as the predicted ones. The definition of precision is
	\begin{equation}
	precision = \frac{|{\hat U}^k \cap E^V|}{|E^V|}\,,
	\end{equation}
where $k =|E^V| $ will be used in this paper.
	\item \textbf{AUC}
	\\ AUC  measures the probability that a random missing link in $E^V$ is given a higher similarity value
than a random non-existent node pair in ${N}$. To calculate it, choose each missing link in $E^V$ and each
non-existent link in ${N}$ to compare their values: if there are $n_>$ times where a missing link has a
higher value and $n_=$ times their values are the same, then the AUC is computed as follows:
	\begin{equation}
	AUC = \frac{n_> + 0.5n_=}{n}\,,
	\end{equation}
where $n = |E^V|*|{N}|$ will be used in this paper.
\end{itemize}

\subsection{Target Defense in Link Prediction Attack}

A new target defense method is proposed here against link-prediction-based attacks. For instance, as
shown in Fig.~\ref{fig:link_privacy}, simply deleting the sensitive link $(i,j)$ cannot make it hidden,
because the value of $RA_{ij}$ is equal to 1, which is much higher than other node pairs in the network.
The deleted link could be easily re-identified, if RA is employed to predict missing links. However,
if one deletes link $(i,q)$ and meanwhile inserts link $(p,q)$, then $RA_{ij}=0.5$, which is similar to
other node pairs.

 \begin{figure}[h]
	\centering
	\includegraphics[width=1\linewidth]{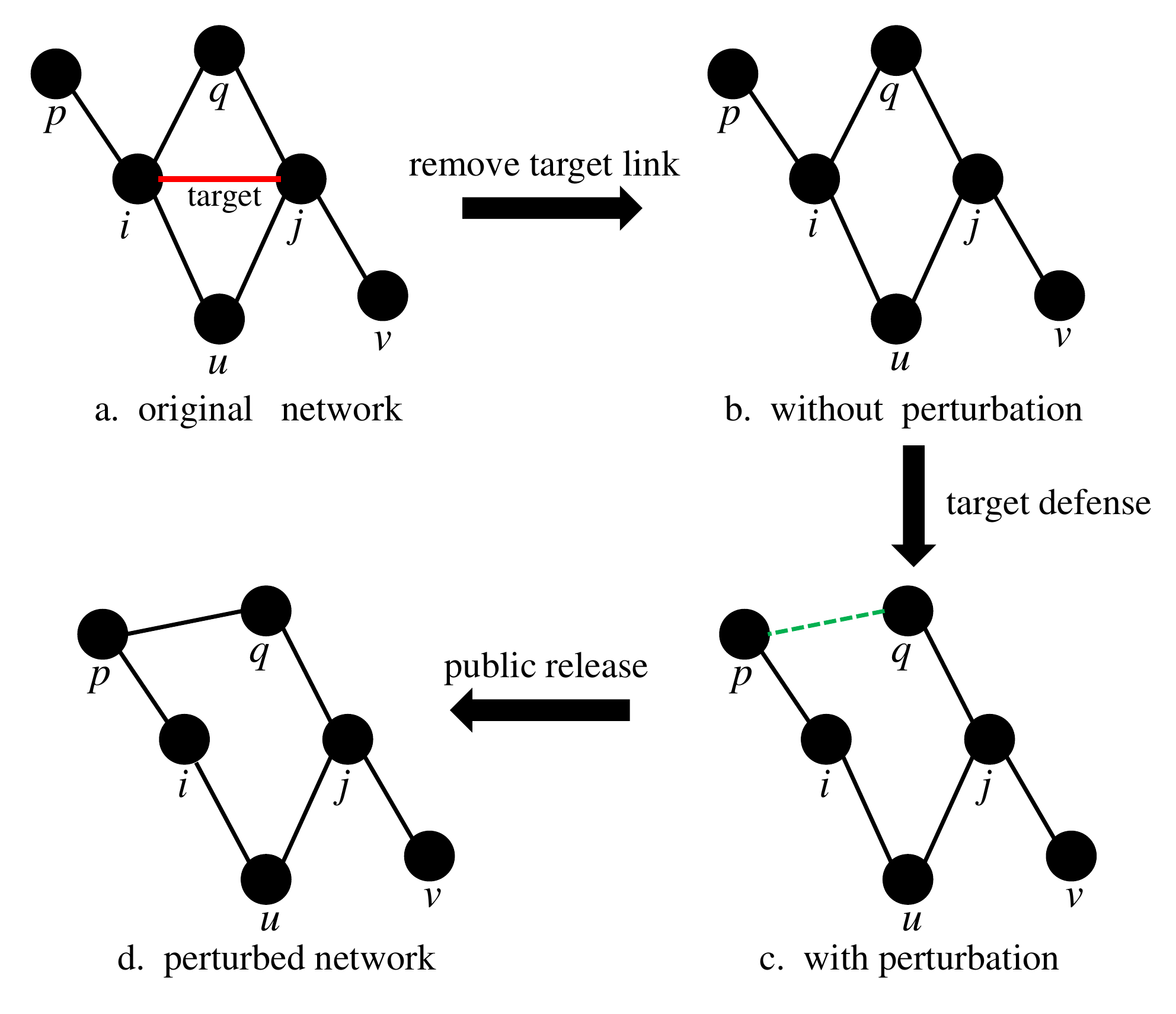}
	\caption{The whole process of defending link-prediction-based attacks via link perturbations.
             Red link $(i,j)$ is sensitive, which needs to be hidden, and green dashed link $(p,q)$
             is the inserted link. The released network can preserve the sensitive link $(i, j)$ from
             being identified by the RA index, to some extent.}
	\label{fig:link_privacy}
\end{figure}

In a network $G(V,E)$, the set $E$ is divided into two disjoint groups: $E^T$ and $E^V$. Take $E^V$ as
the set of sensitive links. The set of link perturbations $\tilde E$ is added into $E^T$, which means
deleting $E_{del}$ from $E^T$ and then inserting $E_{add}$ into $E^T$ so as to reduce the rate that
links in the set of sensitive links being re-identified by a link prediction algorithm $\mathcal{L}$,
which consequently leads to the risk of privacy leakage.

Although the defensive effect may be improved as the number of modified links increases, the number of
modifications should be minimized, taking into account the cost of perturbations and the utility of the
perturbed network. Therefore, the following two restricted conditions are considered:
\begin{itemize}
 \item the numbers of deleted and inserted links should be identical to make the total number of
 links unchanged.
 \item the numbers of deleted and inserted links should be sparse to ensure data utility.
\end{itemize}

Ut supra, target defense against link-prediction-based attacks can be described mathematically as follows:
\begin{equation}
\begin{aligned}
&\mathop {\textbf{min}}\limits_{\tilde E}\,\,\,\, \mathcal{L}(E^T + {\tilde E},E^V)\\
&\textbf{ s.t.} \quad
\begin{cases}
|\tilde E| = |E_{add}|+|E_{del}|=2 \cdot |E_{del}| \ll |E^T|\\
E_{del}\subset  E^T ,\,\,\,\, E_{add} \subset  N\,.
\end{cases}
\end{aligned}
\end{equation}
After being perturbed, $E^T$, $U$ and $N$ of the original network are changed to $\tilde E^T$,
$\tilde U$ and $\tilde N$, respectively. It then follows that
\begin{eqnarray}
&{\tilde E^T}&= E^T - E_{del} + E_{add},\\
&{\tilde U}& = \Omega - (E^T+\tilde E)= U + E_{del} - E_{add},\\
&{\tilde N}&= \Omega - (E^T+\tilde E)- E^V = N + E_{del} - E_{add}\,.
\end{eqnarray}

The relationship of each link set after a perturbation is shown in Fig. \ref{fig:link set},
where the set of non-existent node pairs before and after the perturbation are shown in grey.

\begin{figure}[h]
	\centering
	\includegraphics[width=1\linewidth]{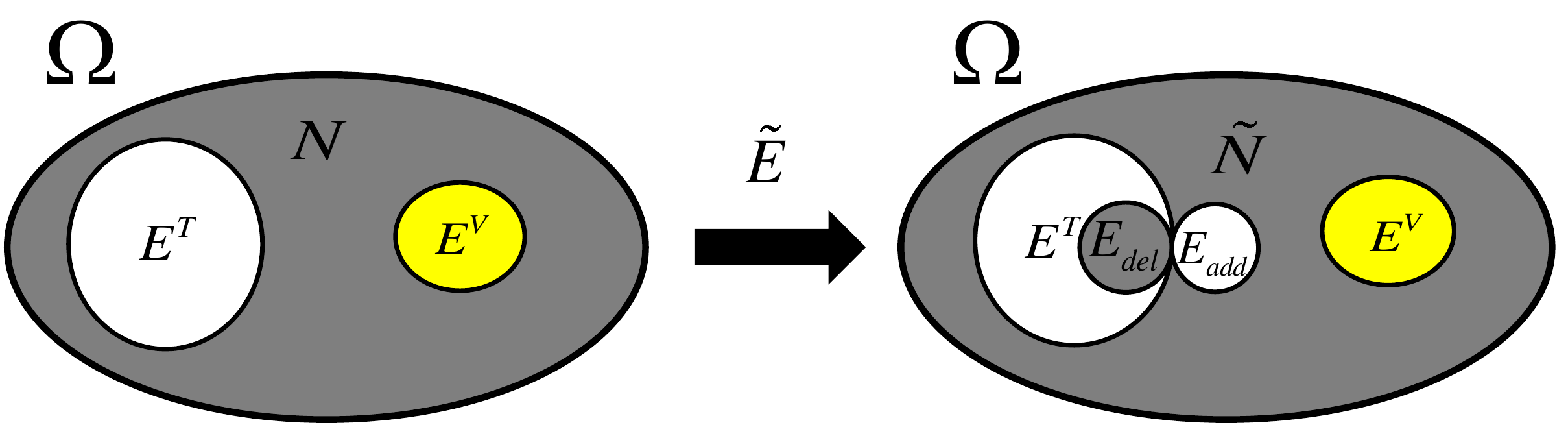}
	\caption{The venn diagram of each node-pair set after adding perturbation $\tilde E$ into the
             original network.}
	\label{fig:link set}
\end{figure}

\section{Methods}\label{section:indices}\label{sec:2}	

In this paper, assume that the adversary uses the RA index to predict the missing links.
The objective here is to preserve $E^V$ from being predicted via adding link perturbations.
Then, random, heuristic and evolutionary perturbations are applied, respectively, as
discussed below.

\subsection{Random Perturbations}

\subsubsection{Randomly Link Rewiring (RLR)}

For a given network in which sensitive links are removed, one can randomly delete some links
$\ell_{del}$ from $E^T$ and then insert the same number of new links $\ell_{add}$, which exist
in $N$.

\subsubsection{Randomly Link Swapping (RLS)}

Another common practice for randomizing a network is link swapping, which not only keeps the
total number of links unchanged but also preserves the degrees of nodes~\cite{Xuan2009Optimal}.
As shown in Fig.~\ref{swap}, link swapping removes two randomly chosen links from $E^T$ and
creates two new links existing in $N$ before swapping.

\begin{figure}[h]
	\centering
	\includegraphics[width=0.7\linewidth]{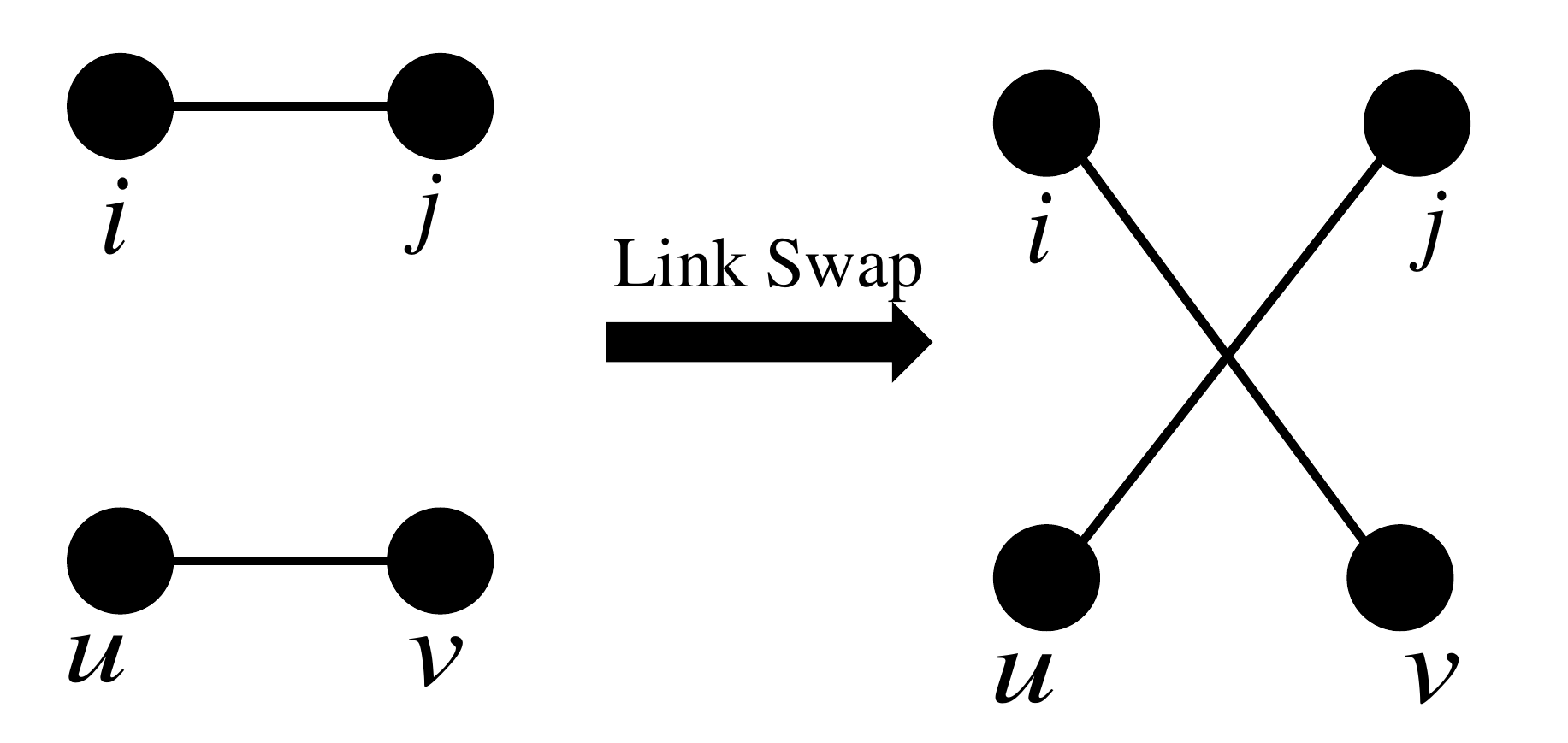}
	\caption{An example for link swapping. If either $(i,v)$ or $(i,v)$ already exists in $E^T$,
             no swapping is performed, and another attempt is made to find a suitable link pair.}
	\label{swap}
\end{figure}

\subsection{Heuristic Perturbation }

To hide the links in $E^V$, one can degrade the performance of link prediction through decreasing
the similarity values of node pairs in $E^V$ and then increasing the values of non-existent node
pairs. Here, a greedy strategy is adopted to rewire links. First, the RA index is used to calculate
the similarity values of node pairs $(i,j) \in \Omega$ and then sort them in descending order
according to their values.

For each node pair $(i,j) \in \Omega$, there are three cases, i.e., $(i,j) \in E^T$, $(i,j) \in  N$,
and $(i,j) \in E^V$. One can traverse the ordered node pairs and conduct different operations by
deleting or inserting links for each case.
\begin{itemize}
\item $(i,j) \in E^T$

If the current node pair $(i,j)$ exists in $E^T$, directly delete the link. Because the deleted
$(i,j)$ will exist in $\tilde U$ and $(i,j)$ has a high RA value.

\item $(i,j) \in N$

If the current node pair $(i,j)$ exists in $N$, select the node $k$ whose degree is the smallest
in the one-hop neighborhoods of $i$ and $j$, except their common neighbors. Then, insert $(i,k)$
or $(j,k)$ to increase the RA value of the current node pair $(i,j)$.

\item $(i,j) \in E^V$

If the current node pair $(i,j)$ exists in $E^V$, select the common neighbor $k$ whose degree is
the smallest and then delete $(i,k)$ or $(j,k)$; or select two common neighbors $k$ and $l$ whose
degrees are among the smallest. Then, insert $(k,l)$ to reduce the RA value of the node pair $(i,j)$.
 \end{itemize}
The above operation is shown in Fig.~\ref{rewire}.

It should ensure that all deleted links $\ell_{del}\in E^V$ and inserted links $\ell_{add}\in N$.
The intact pseudo-codes of the algorithms are described in Algorithms ~\ref{alg:slr}, ~\ref{alg:dl}
and ~\ref{alg:al}.

\begin{figure}[h]
	\centering
	\includegraphics[width=0.8\linewidth]{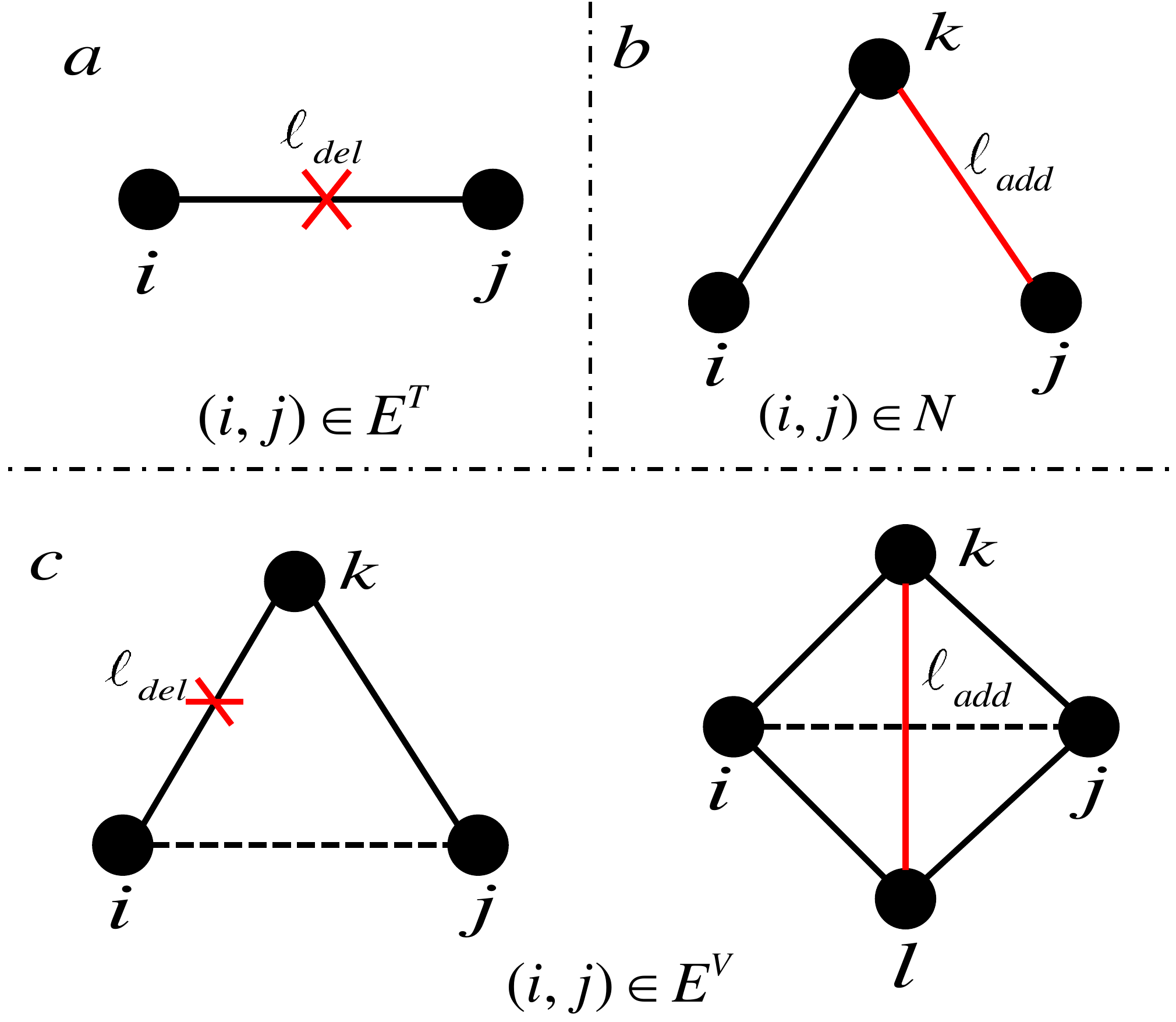}
	\caption{Four different operations through deleting or inserting links for $(i,j)$ in
             corresponding situations. }
	\label{rewire}
\end{figure}

\begin{algorithm}
\caption{  \textbf {Heuristic Perturbation (HP)}}
 \label{alg:slr}
\KwIn{$E^T$, $E^V$,  $\emph N$,  $\Omega$  and $m$.}
\KwOut {$E_{del}$ and $E_{add}$.}
Initialize $E_{del} $ and $E_{add}$\;
sort $\Omega$ in descending order to get $\hat \Omega$\;
\For {each $(i,j) \in \hat \Omega$}
{
    \If {$(i,j) \in E^T \cup E^V $}
    {
       \If {$|E_{del}| < m$}
         {$\ell_{del}$ = \textbf{\emph {Delete Link($(i,j)$)}}\;
          $E_{del}$ = $E_{del}  \cup  \ell_{del}$  \;
          \textbf{continue}\;
         }
    }
{     \If {$(i,j) \in E^V \cup N $}
        {\If {$|E_{add}| < m$ }
             {$\ell_{add}$ = \textbf{\emph {Add Link($(i,j)$)}}\;
             $E_{add}$ = $E_{add}  \cup  \ell_{add}$;}

        }
}
\If {$|E_{del}| = m$ and $|E_{add}| = m$ }
     {\textbf{Break}\;}
}
\Return $E_{del}$ and $E_{add}$.
\end{algorithm}
\begin{algorithm}
\caption{ \textbf{ \emph {Delete Link} }}
 \label{alg:dl}
\KwIn{ node pair $(i,j)$}
\KwOut{$\ell_{del}$}
\If {$(i,j)\in E^T$}
{
$E^T = E^T - (i,j)$\;
\Return $(i,j)$\;
}
 \If {$(i,j)\in E^V$}
{Find  $k=\mathop {\arg\min }\limits_k\{d_k| k \in \Gamma \left( i \right)
\cap \Gamma \left( j \right)\}$\;
Find $l = \mathop {\arg \max }\limits_l \{ {d_l}|l \in \{ i,j\} \} $\;
\If {$(k,l)$ exists}
{
    $E^T = E^T - (k,l)$ \;
    \Return $(k,l)$\;
}
}
\end{algorithm}

\begin{algorithm}
\caption{  \textbf{\emph {Add Link} }}
 \label{alg:al}
\KwIn{ node pair $(i,j)$}
\KwOut{$\ell_{add}$}

\If {$(i,j) \in E^V$}
{
    Find $k=\mathop {\arg\min }\limits_k\{d_k| k \in \Gamma \left( i \right)
    \cap \Gamma \left( j \right)\}$\;
     Find $l=\mathop {\arg\min }\limits_l\{d_l| l \in \{\Gamma \left( i \right)
     \cap \Gamma \left( j \right)-k\}\}$\;
     \If {$(k,l)$ exists and $(k,l) \notin E^V$}
     {
         $E^T$ = $E^T  \cup  (k,l)$\;
         \Return $(k,l)$\;
      }
}
\If {$(i,j)   \in N$}
{Find  $k=$
$\mathop {\arg\min }\limits_k \{d_k| k \in \{ \Gamma \left( i \right) \cup
\Gamma \left( j \right) - \Gamma \left( i \right) \cap \Gamma \left( j \right) \}\}$\;
\If {$k$ exists}
{
$E^T$ = $E^T \cup   \{(k,l)| (k,l)\notin E^V, l\in \{i,j\}\}$\;
\Return $\{(k,l)| (k,l)\notin E^V, l\in \{i,j\}\}$\;
}
}
\end{algorithm}

\subsection{Evolutionary Perturbations}

Link perturbation can be treated as a combinatorial optimization problem and the number of candidate
combinations of deleted/inserted links is $C_{|E^T|}^m \cdot C_{|N|}^m$, where $m$ is the number of
deleted/inserted links.

To hide sensitive links in $E^V$, one needs to reduce the possibility that these links can be predicted.
Both AUC and precision are indicators for evaluating the performance of a link prediction algorithm.
AUC evaluates the global performance while precision focuses on pairs with highest similarity values.
Therefore, an evolutionary algorithm is designed here to find OLP, considering
both AUC and precision together as the reduction objective.

Next, consider two evolutionary algorithms, i.e. Genetic Algorithm (GA) and Estimation of Distribution
Algorithm (EDA). First, design the common chromosome, fitness and selection operation.

\begin{itemize}
\item \textbf{Chromosome}

The chromosome consists of two parts: $\ell_{del}^i \in E^T$ and $\ell_{add}^i \in N$, $i=1,2,...,m$.
The number of deleted and inserted links are identical according to the assumption. The diagram of
chromosome is shown in Fig.~\ref{fig:gene}, where the length of chromosome depends on the experimental
setup.

\item \textbf{Fitness}

Two major link prediction performance measures, i.e. precision and AUC, are considered in the fitness
design.
Note that directly taking both precision and AUC as fitness is time-consuming. Thus, a new fitness
function is designed to simplify the calculation, as follows:
\begin{equation}
\begin{aligned}
\mathop {\textbf{\emph{{max}}}}\limits_{\ell  \in {E^V},n \in \tilde N}
&Fitness =  \alpha \cdot {\sum\limits_{n \in \tilde N} {\delta (R{A_n} >
\mathop {\max }\limits_{\ell  \in {E^V}} \{R{A_\ell }\})} }\\
&+( \frac{1}{{|\tilde N|}}\sum\limits_{n \in \tilde N} {R{A_n}}
-\frac{1}{{|{E^V}|}}\sum\limits_{\ell  \in {E^V}} {R{A_\ell }})\,,
\end{aligned}
\end{equation}
where $\delta(x)$ is an indicator function: $\delta(x)=1$ when $x$ is true; otherwise, $\delta(x) = 0$,
and $\alpha$ is a tunable parameter.
The fitness function consists of two parts: (1) denoting the number of non-existent links with higher
similarity values than the most predictable sensitive pairs, which has greater influence on the precision;
(2) denoting the difference in the average similarity values between non-existent pairs and sensitive pairs,
which affects the AUC more.

\begin{figure}[!t]
	\centering
	\includegraphics[width=0.8\linewidth]{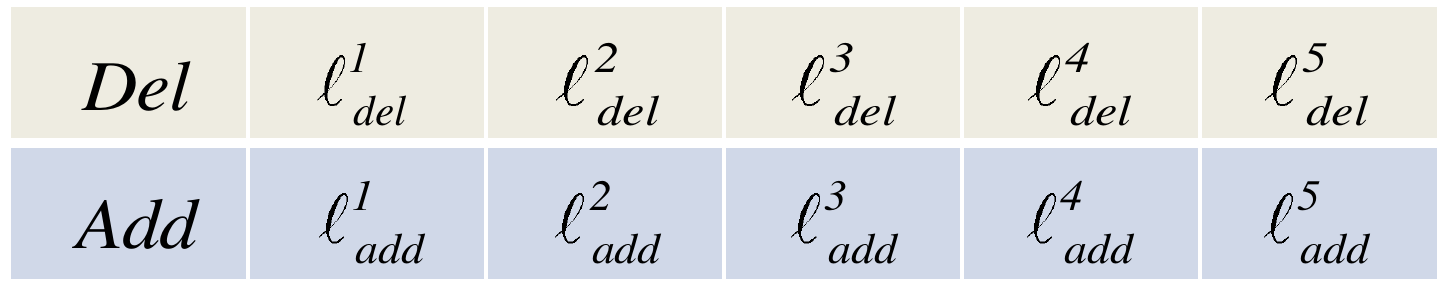}
	\caption{The diagram of chromosome in GA and EDA, respectively. It consists of two parts, i.e.
             deleted links $\ell_{del} \in E^T$ and inserted links $\ell_{add} \in N$.}
	\label{fig:gene}
\end{figure}

\item \textbf{Selection Operation}

The selection operation is conducted on roulette. To make the fitness values positive, apply exponential transform $(e^{x})$ to the fitness.
At the same time, retain $n_{elite}$ elites according to the fitness values.

\item \textbf{Mutation Operation}

Select $n_{mutation}$ chromosomes based on the fitness values for mutation operation. Then, traverse
each link in a chromosome and conduct mutation operation to it according to a mutation rate $pm$.
Specifically, one randomly replaces the deleted link $\ell_{del}$ with another one from $E^T$, and
randomly replace the inserted link $\ell_{add}$ with another one in $N$. If it happens to encounter
collision, that is, there exist duplicate links in the chromosome, then repeat the above operation
until collision disappears.
\end{itemize}

\subsubsection{Genetic Algorithm (GA)}

\begin{itemize}
\item \textbf{Crossover Operation}

Single point crossover is used here and $n_{crossover}$ chromosomes are selected according to the
values of fitness and crossover rate $pc$.
If it happens to encounter collision, as shown in Fig.~\ref{crossover}, retreat the crossover
operation of the duplicate red links.

\begin{figure}[h]
	\centering
	\includegraphics[width=0.8\linewidth]{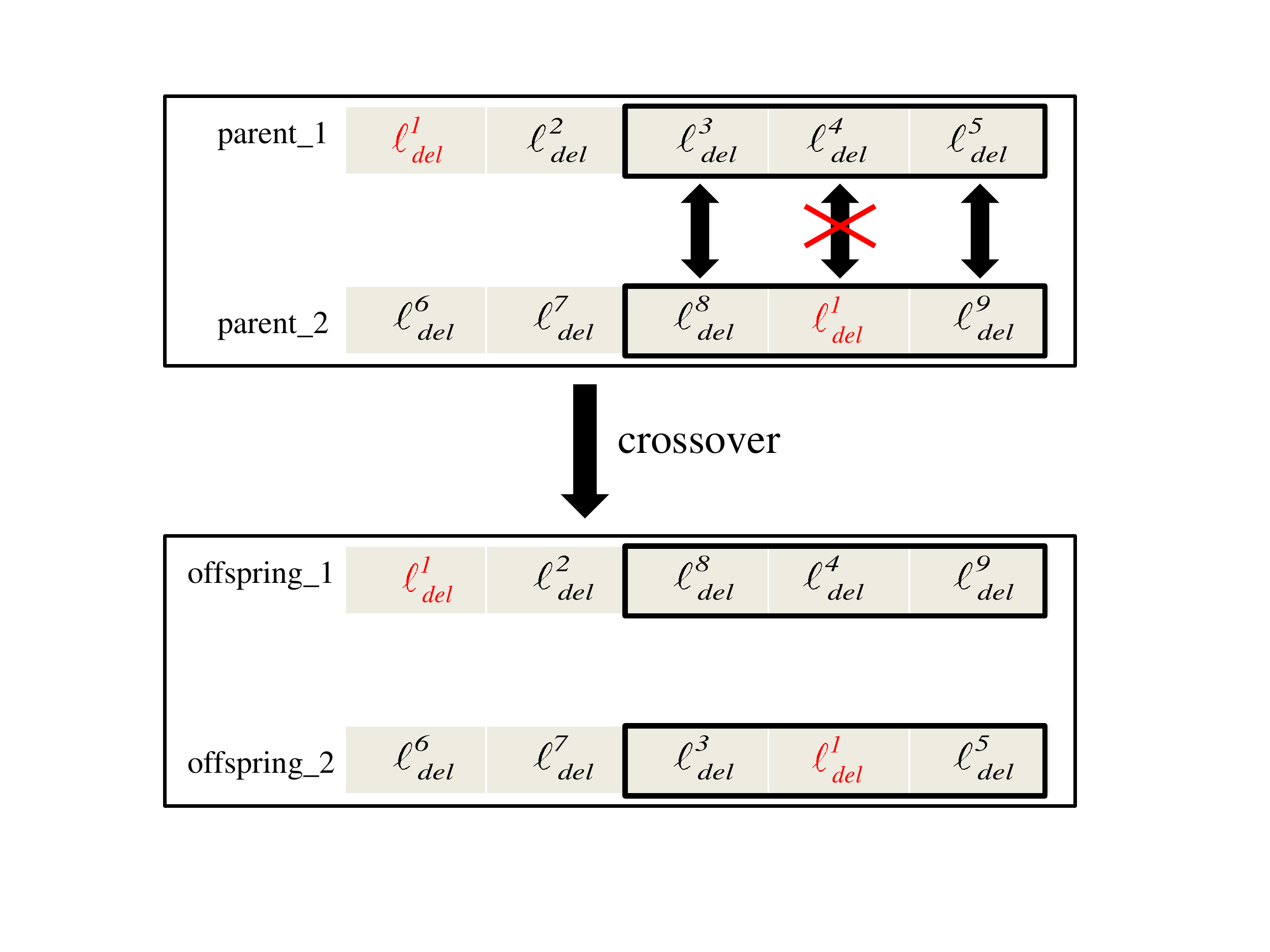}
	\caption{Crossover of chromosomes in the perspective of deleted links.
             Red genotype will encounter collision after a direct exchange.}
	\label{crossover}
\end{figure}
\end{itemize}

\begin{algorithm}[!t]
\caption{  \textbf{GA for Link Perturbation}} \label{alg:ga}
\KwIn{$m$, $n_{iteration}$, $n_{elite}$, $n_{crossover}$, $n_{mutation}$, $pc$, and $pm$.}
\KwOut{population}
Initialize population\;
\While {not convergence or not reach to $n_{iteration}$}
      {calculate the fitness of population\;
      retain $n_{elite}$ $elites$\;
     $Cro$ = \textbf{\emph{Selection Operation($n_{crossover}$)}}\;
      $\hat {Cro}$ = \textbf{\emph {Crossover Operation($pc$, $Cro$)}}\;
      $Mut$ = \textbf{\emph{Selection Operation($n_{mutation}$)}}\;
      $\hat {Mut}$ = \textbf{\emph {Mutation Operation($pm$, $Mut$)}}\;
      population = $elites  \cup \hat {Cro}  \cup \hat {Mut}$ ;
      }
\Return population\;
\end{algorithm}

\subsubsection{Estimation of Distribution Algorithm (EDA)}

EDA is a new kind of random optimization algorithm based on statistics theory.
EDA has obvious differences with GA. As mentioned above, GA applies the crossover operation
to generate new individuals; however, EDA searches better individuals through preference
sampling and statistical learning. Specifically, one samples $n_{estimation}$ individual
chromosomes according to their fitness values and then estimates the probability distributions
of the deleted links $P(\ell_{del})$ and inserted links $P(\ell_{add})$ based on simple
statistics. Finally, one generates $n_{eda}$ chromosomes according to their respective
distributions. The pseudo-codes of GA and EDA are described in Algorithms ~\ref{alg:ga},
~\ref{alg:eda} and ~\ref{alg:GP}.

\begin{algorithm}[!t]
\caption{  \textbf{EDA for Link Perturbation }}
 \label{alg:eda}
\KwIn{$m$, $n_{iteration}$, $n_{elite}$, $n_{estimation}$, $n_{eda}$,  $n_{mutation}$, and $pm$.}
\KwOut{pop}
Initialize population\;
\While {not convergence or not reach to $n_{iteration}$}
      {calculate the fitness of population\;
      retain $n_{elite}$  $ elites $\;
      $\hat {Eda}$ = \textbf{\emph {Generate Population($n_{estimation}$, $n_{eda}$)}}\;
      $Mut$ = \textbf{\emph{Selection Operation($n_{mutation}$)}}\;
      $\hat {Mut}$ = \textbf{\emph {Mutation Operation($pm$, $Mut$)}}\;
      population =  $elites  \cup  \hat {Eda}  \cup \hat {Mut}$  ;
      }

\Return pop\;
\end{algorithm}

\begin{algorithm}[!t]
\caption{  \textbf{Generate Population  }}
 \label{alg:GP}
\KwIn{$n_{estimation}$ and $n_{eda}$.}
\KwOut{population}
Initialize population\;
$Est$ = \textbf{\emph{Selection Operation($n_{estimation}$)}}\;
Estimate the distributions of  $P({\ell_{del}})$ and $P({\ell_{del}})$
by statistical sampling, where $P({\ell_{del}})  \approx  P({\ell_{del}}|Est)$
and $P({\ell_{del}}) \approx P({\ell_{add}}|Est)$ \;
\While {not reach to $n_{eda}$}
{Sample $E_{del}$ and $E_{add}$ according to $P({\ell_{del}})$
and $P({\ell_{add}})$ respectively to generate new individual\;}
\Return population\;
\end{algorithm}

\subsubsection{Accelerating the Fitness Calculation}

Since one should calculate the fitness value for each individual at every iteration,
the speed of fitness calculation directly affects the computational efficiency of
the evolutionary algorithm. It is proposed here to accelerate the calculation of
fitness by only recalculating the increment of the original network for each individual.
Note that, for RA, deleting $(i,j)$ or inserting $(i,j)$ only affects the values of
one-hop neighborhood.

\begin{theorem}
Denote by $\tilde N^{recalc}_{ij}$ the set of node pairs whose RA values need to be recalculated
in $\tilde N $. For the RA index, by either deleting or inserting $(i,j)$, one has
$|\tilde N^{recalc}_{ij}| \le \frac{1}{2} (k_i^2 + k_j^2 + k_i + k_j)+1$.
\end{theorem}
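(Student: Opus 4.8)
The plan is to exploit the purely local nature of the RA index: the value $RA_{xy} = \sum_{z \in \Gamma(x) \cap \Gamma(y)} 1/d_z$ of a node pair $(x,y)$ is determined entirely by its common-neighbor set $\Gamma(x)\cap\Gamma(y)$ together with the degrees of those common neighbors. Consequently, I would first observe that inserting or deleting the single link $(i,j)$ perturbs the network in only two ways: it changes the degrees $k_i$ and $k_j$ of $i$ and $j$ by $\pm 1$, and it toggles the adjacency between $i$ and $j$; every other degree and every other adjacency is left untouched. A pair $(x,y)$ therefore needs its RA value recomputed only if one of these two changes reaches it, i.e. only if $i$ or $j$ is one of its common neighbors, or if the $i$-$j$ adjacency change inserts or removes an element of $\Gamma(x)\cap\Gamma(y)$.

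Next I would enumerate the affected pairs by cause and bound each group. The degree change at $i$ affects exactly the pairs having $i$ as a common neighbor, i.e. pairs formed from the $k_i$ neighbors of $i$, of which there are at most $\binom{k_i}{2}$; symmetrically the degree change at $j$ contributes at most $\binom{k_j}{2}$. For the adjacency change, I would argue that a node can enter or leave $\Gamma(x)\cap\Gamma(y)$ only when that node is $i$ or $j$ and the other of the two is an endpoint of the pair; this confines the effect to pairs of the form $(i,y)$ with $y\in\Gamma(j)$ and $(j,x)$ with $x\in\Gamma(i)$, contributing at most $k_j+k_i$ pairs, together with the single pair $(i,j)$ itself. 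Summing the four groups yields $\binom{k_i}{2}+\binom{k_j}{2}+k_i+k_j+1 = \frac12(k_i^2+k_j^2+k_i+k_j)+1$, and since $\tilde N^{recalc}_{ij}$ is obtained by discarding from this collection every pair not lying in $\tilde N$, the stated inequality follows.

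The step I expect to be the main obstacle is the completeness argument in the first paragraph, namely verifying that no pair outside these groups can change, which reduces to checking that any alteration of a common-neighbor set must be witnessed by $i$ or $j$ entering or leaving the intersection. A secondary care point is handling insertion and deletion uniformly: in the two cases the relevant degree is $k_i$ or $k_i\pm1$ and the edge $(i,j)$ is absent or present, so I would fix $k_i,k_j$ to be the degrees in the training network and note that the count above is deliberately a non-tight over-estimate, since the adjacency-change pairs already belong to the common-neighbor groups and are thus double-counted. This looseness is harmless, because only an upper bound on $|\tilde N^{recalc}_{ij}|$ is required.
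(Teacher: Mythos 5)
Your proposal is correct and takes essentially the same route as the paper: both identify exactly the same five families of potentially affected pairs --- the pair $(i,j)$ itself, pairs within $\Gamma(i)$, pairs within $\Gamma(j)$, and pairs of the form $(\Gamma(i),j)$ and $(i,\Gamma(j))$ --- and bound them by $\binom{k_i}{2}+\binom{k_j}{2}+k_i+k_j+1$. Your organization by cause (degree change at $i$ or $j$ versus the adjacency toggle) is merely a repackaging of the paper's positional case analysis on whether $u$ and $v$ coincide with $i$ and $j$, with the added (and welcome) explicit completeness remark that the paper leaves implicit.
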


\begin{proof}
Assume that $(u,v) \in \tilde N^{recalc}_{ij}$, and denote by $G$ the original network and by
$\Gamma (i)$ the one-hop neighbors of node $i$ in $G$.

First, consider $(i,j)\in G$, and delete $(i,j)$ from $G$.
\begin{itemize}
\item  $u=i$, $v=j$

Before deleting $(i,j)$, one has $(i,j) \notin N$; after deleting $(i,j)$, one has
$(i,j) \in \tilde N$. Thus, $RA_{uv}$ needs to be calculated.

\item $u\ne i$, $v\ne j$

If $i$ or $j\in \Gamma (u) \cap \Gamma (v)$ in $G$, then $RA_{uv}$ needs to be recalculated
after deleting $(i,j)$ and $(u,v) \in \{(\Gamma (i),\Gamma (i)),(\Gamma (j),\Gamma (j))\}$;
if $\{i,j\}\notin \Gamma (u) \cap \Gamma (v)$ in $G$, then $RA_{uv}$ needs not to be recalculated
after deleting $(i,j)$.

\item $u\ne i$, $v=j$

If $i \in \Gamma (u) \cap \Gamma (j)$ in $G$, then $RA_{uv}$ needs to be recalculated after
deleting $(i,j)$ and $(u,v) \in \{(\Gamma (i),j)\}$; if $i \notin \Gamma (u) \cap \Gamma (j)$
in $G$, then $RA_{uv}$ needs not to be recalculated after deleting $(i,j)$.

\item $u=i$, $v\ne j$

$(u,v) \in \{(i,\Gamma (j))$\} by symmetry.
\end{itemize}
To sum up, $(u,v) \in \{(i,j),(\Gamma (i),\Gamma (i)),(\Gamma (j),\Gamma (j)),(\Gamma (i),j),\\(i,\Gamma (j))\}$ and the same for inserting $(i,j)$. Therefore, 
\begin{eqnarray}
\left| {\tilde N_{ij}^{recalc}} \right|&\le& \frac{1}{2}{k_i}({k_i} - 1)
+ \frac{1}{2}{k_j}({k_j} - 1) + ({k_i} + {k_j}) + 1 \nonumber\\
&=&\frac{1}{2} (k_i^2 + k_j^2 + k_i + k_j)+1\,.
\end{eqnarray}
\end{proof}

An example to illustrate the influence of deleting $(i,j)$ is shown in Fig.~\ref{fig:neighbor}.
The set $\tilde N$ of non-existent node pairs in the perturbed network can be written as
\begin{eqnarray}
{\tilde N } &=& N + E_{del} - E_{add} \nonumber\\
&=& (N - {N^{recalc}_{ij}}) + {N^{recalc}_{ij}} + ({E_{del}} - {E_{add}})\,.
\end{eqnarray}
Thus, one only needs to recalculate the RA values in $N^{recalc}_{ij}$, $E_{del}$ and $E_{add}$,
to update the fitness. Consequently, the complexity of calculating the similarity values of
node pairs in the perturbed network is significantly reduced, approximately from $O( {|V{|^2}} )$
to $O( {m{\langle k\rangle }^2} )$, where $m$ is the number of deleted/inserted links and
${\langle k\rangle }$ is the average degree of $G$.

\begin{figure}[t]
	\centering
	\includegraphics[width=1\linewidth]{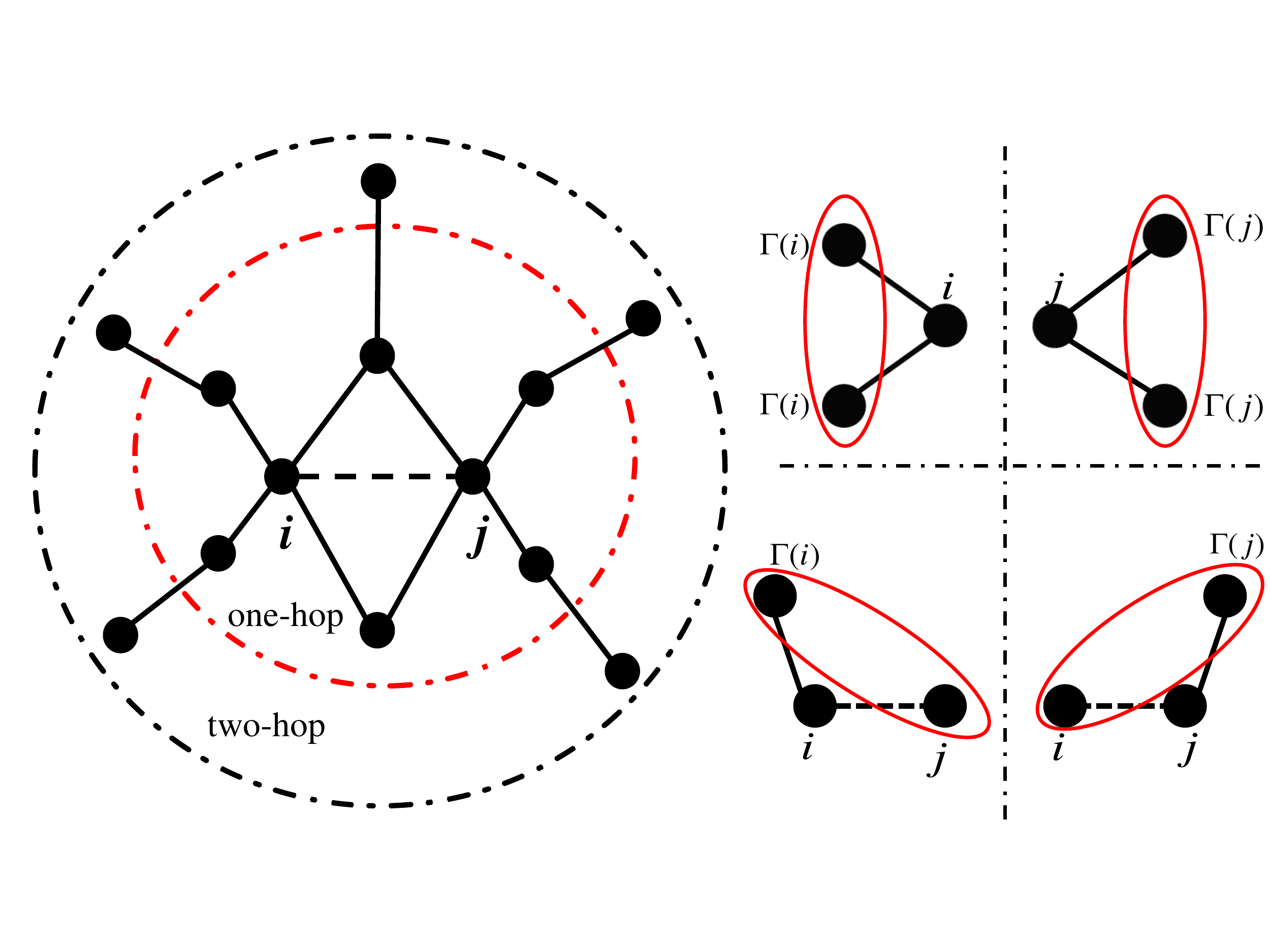}
	\caption{\textbf{Left}: neighborhood of $(i,j)$, where one-hop neighbors are within the red
             dashed circle.
             \textbf{Right}: four kinds of node pairs whose RA values need to be recalculated if
             $(i,j)$ is deleted or inserted.}
	\label{fig:neighbor}
\end{figure}

\section{Experiments and simulation results}\label{sec:3}

\begin{table}[!t]\renewcommand{\arraystretch}{1.2}
	\newcommand{\tabincell}[2]{\begin{tabular}{@{}#1@{}}#2\end{tabular}}
	\caption{Basic topological features of six networks.
             $|V|$ and $|E|$ are the numbers of nodes and edges, respectively;
             $\langle k\rangle$ is the average degree; $C$ is the clustering coefficient
             and $\langle d \rangle$ is the average distance.}
    \centering
	\begin{centering}
		\begin{tabular}{c|cccccc}
			\Xhline{1pt}
			& Mexican & Dolphin & Bomb &Lesmis& Throne& Jazz\tabularnewline
			\Xhline{1pt}
			$|V|$& \tabincell{c}{35} & 62 &64&77& 107&198\tabularnewline
			$|E|$& \tabincell{c}{117}& 159 &243&254& 352& 2742\tabularnewline
			$\langle k\rangle$& \tabincell{c}{6.686} &5.129&7.594& 6.597&6.597& 27.697\tabularnewline
			$C$& \tabincell{c}{0.448} &0.259 &0.622&0.573& 0.551& 0.617\tabularnewline
			$\langle d \rangle$ & \tabincell{c}{2.106}&3.357 & 2.691&2.641& 2.904& 2.235\tabularnewline
			\Xhline{1pt} 			
		\end{tabular}		
	\end{centering}	
	\label{topo}
\end{table}

\subsection{Data Description}

Here, six networks are selected, with topological features shown in TABLE~\ref{topo}.
\begin{itemize}
\item \textbf{Mexican political elite (Mexican)} is an undirected network, which contains the core of
the political elite including the presidents and their closest collaborators.
Links represent significant political, kinship, friendship, or business ties among them~\cite{Gil1996The}.
\item \textbf{Dolphin social network (Dolphin)} is an undirected social network of dolphins living in a
community and links present the frequent associations between pair-wise dolphins~\cite{Lusseau2003The}.
\item \textbf{Train Bombing (Bomb)} is an undirected network, which contains contacts between suspected
terrorists involved in the train bombing of Madrid on March 11, 2004, as reconstructed from newspapers.
Nodes represent terrorists and link between two terrorists means that there was a contact between
them~\cite{Hayes2006Computing}.
\item \textbf{Lesmis} is an undirected network of characters in Victor Hugo's famous novel \textit{Les Miserables}.
Nodes denote characters and two nodes are connected if the corresponding characters co-appear in the same
chapter of the book~\cite{Knuth1993The}.
\item \textbf{Game of Thrones (Throne)} is an undirected network of character interactions from the novel
\textit{ A Storm of Swords}, where nodes denote the characters in the novel and a link denotes that two
characters are mentioned together in the text~\cite{beveridge2016network}.
\item \textbf{Jazz} is a collaboration network of jazz musicians. Each node is a jazz musician and a link
denotes that two musicians have played together in a band~\cite{gleiser2003community}.
\end{itemize}

\subsection{Simulation Results}
\subsubsection{Defense Effects of Various Link Perturbations}
All links are randomly divided into 10 uniform and disjoint sets. One of the sets is selected as a
validation set $E^V$ to be the set of sensitive links that need to be protected, and the rest is used
as a training set $E^T$. Use the above-mentioned five methods including random, heuristic and evolutionary
perturbations, respectively, to make a crosswise comparison.

\begin{table}[!t]\renewcommand{\arraystretch}{1}
	\newcommand{\tabincell}[2]{\begin{tabular}{@{}#1@{}}#2\end{tabular}}
	\caption{Parameters settings for evolutionary algorithms, including GA and EDA, in each network.}
    \centering
	\begin{centering}
		\begin{tabular}{c|c|c}
			\Xhline{1.2pt}
            $Item$& Meaning&Value \tabularnewline
            \Xhline{1.2pt}
            $\alpha$& weight in fitness &- \tabularnewline
			$m$& number of deleted/inserted links&- \tabularnewline
			$n_{iteration}$& number of iterations& 1000 \tabularnewline
			$e_{elite}$&  number of retained elites& 10\tabularnewline
			$n_{crossover}$& number of chromosomes for crossover & 50  \tabularnewline
			$n_{mutation}$ & number of chromosomes for mutation & 50 \tabularnewline
            $pc$ & crossover rate&0.7 \tabularnewline
            $pm$ & mutation rate&0.1 \tabularnewline
            $n_{estimation}$ & number of chromosomes for estimation& 250\tabularnewline
            $n_{eda}$ &  number of generated population in EDA& 50\tabularnewline
			\Xhline{1.2pt}			
		\end{tabular}		
	\end{centering}	
	\label{parameter}
\end{table}

In order that all links are used for both training and validation set, a 10-fold cross-validation is used
to calculate the average precision and AUC. To ensure the sparsity of the perturbations, the proportion of
deleted and inserted links in the training set are limited to observe the downtrend of defense effect with
an increasing proportion of perturbations. The proportion is defined as the ratio of deleted/inserted links
to all links in the training set. For example, if the proportion equals 0.1, it means that the deleted and
inserted links account for 10\% of the training set, respectively. To perform a reasonable comparison, add
the perturbations using different methods in the same training set and then calculate the precision and AUC
in the same validation set.

In Random Link Rewiring (RLR), randomly delete and insert a certain proportion of links in each training set
and then repeat the procedure for one hundred times. In Randomly Link Swapping (RLS), conducting link swapping for one time means
deleting two links and adding two other links at the same time. So, one may conduct half times of link
swapping comparing to RLR and then repeat one hundred times of the procedure in each training set.
\begin{figure*}[!t]
	\centering
	\includegraphics[width=1\linewidth]{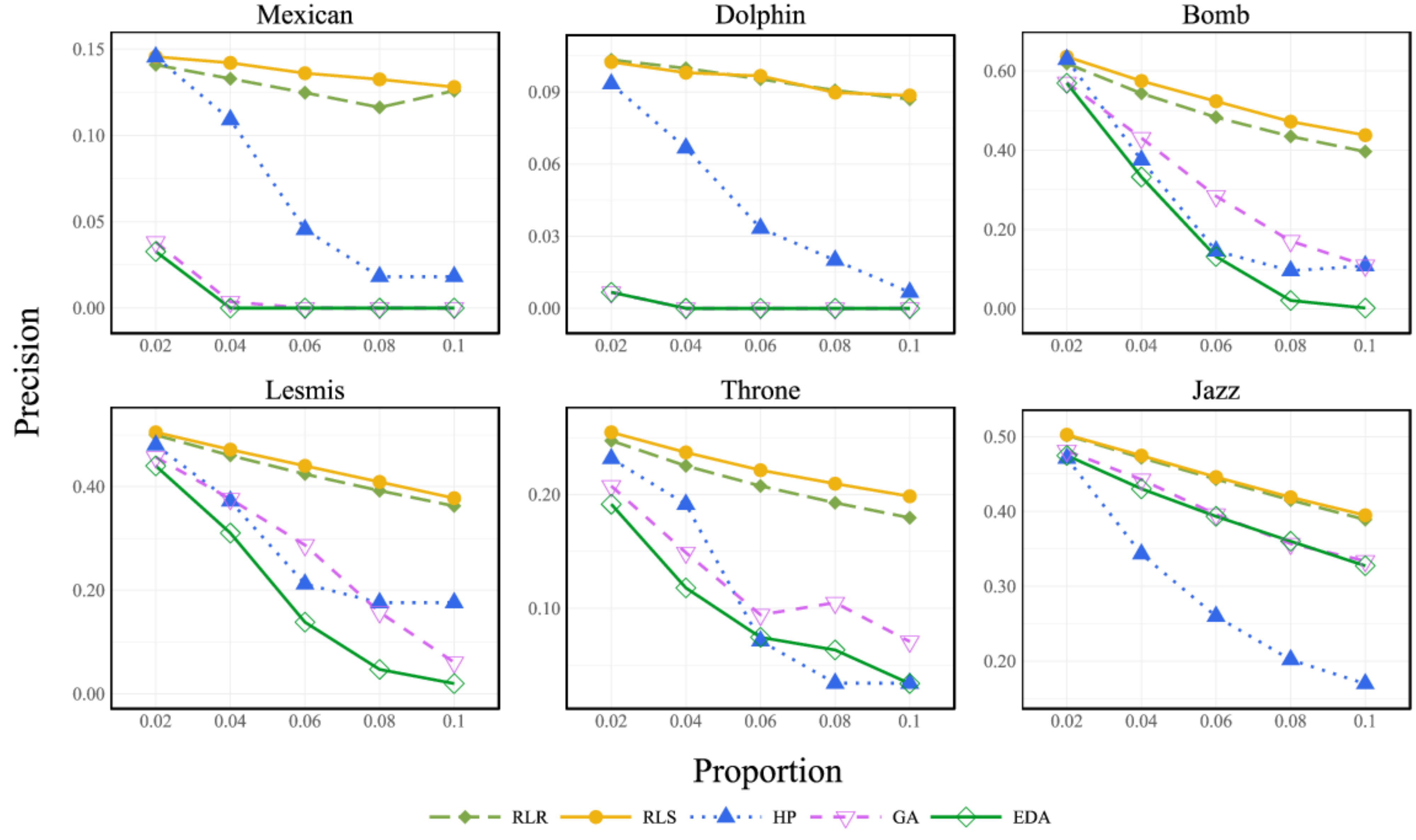}
	\caption{Average precision of each network perturbed by different methods with increasing proportions of deleted and inserted links. }
	\label{fig:PRE}
\end{figure*}
\begin{figure*}[!t]
	\centering
	\includegraphics[width=1\linewidth]{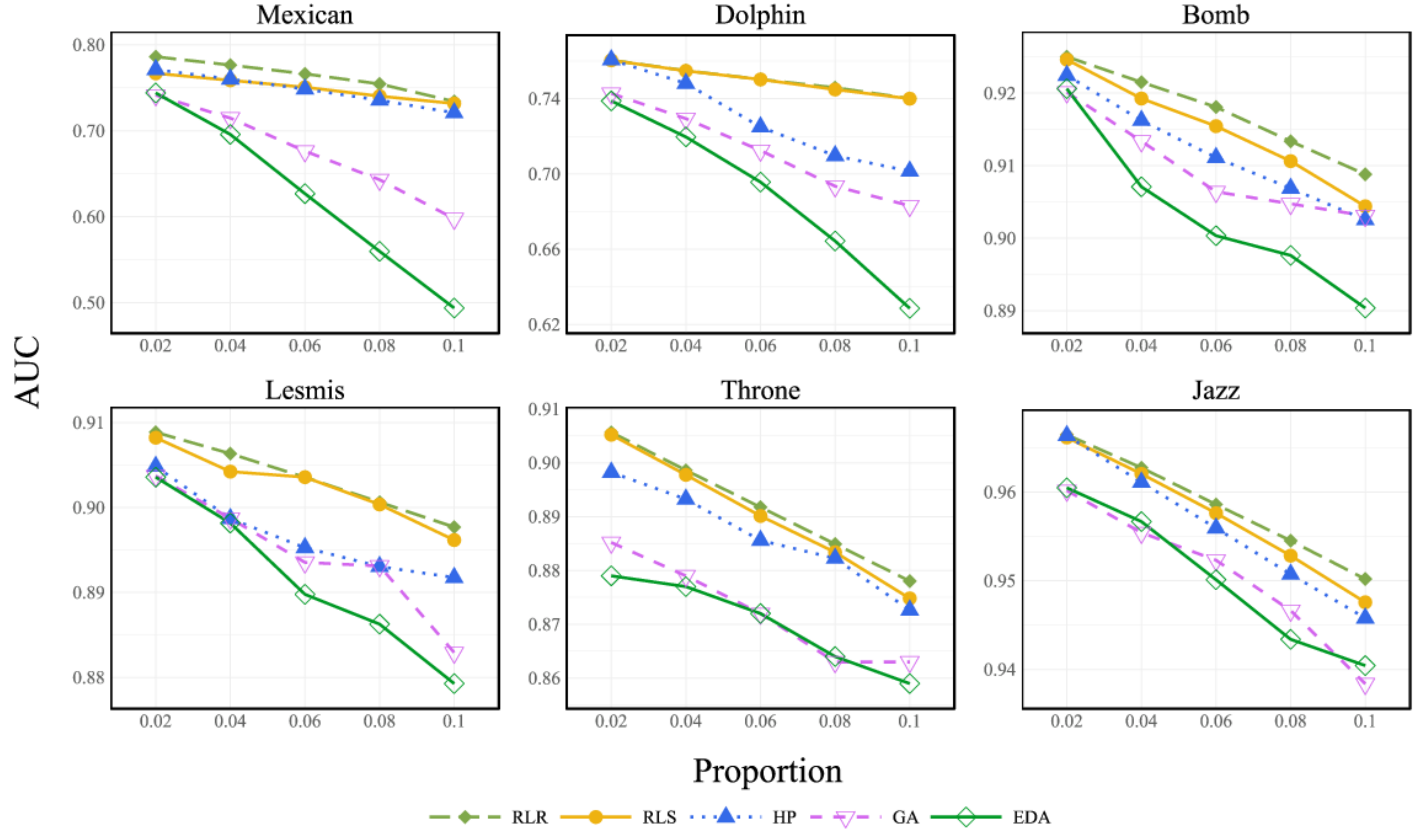}
	\caption{Average AUC of each network perturbed by different methods with increasing proportions of deleted and inserted links. }
	\label{fig:AUC}
\end{figure*}
In Heuristic Perturbation (HP), similarly conduct one hundred times of the procedure and then obtain the
average precision and AUC.
In GA and EDA,
basic parameters of the evolutionary algorithms are set to be the empirical values, as shown in
TABLE~\ref{parameter}. The main parameter to be tuned is the $\alpha$ in fitness, and the results are shown
in TABLE~\ref{alpha}. It can be seen that there does not exist any fixed $\alpha$ that is optimal for all
networks. So, one may select the value of $\alpha$ separately for each individual network.
In the experiments, for \textbf{Mexican}, \textbf{Dolphin}, \textbf{Lesmis} and \textbf{Throne}, set
$\alpha=0.01$; for \textbf{Bomb}, set $\alpha=1$; for \textbf{Jazz}, set $\alpha=0$. Then, repeat five
times in each training set and calculate the average precision and AUC of the optimal individuals in the
final population.

\begin{table*}[!ht]
\caption{Average precision and AUC in each network with different values of $\alpha$
         when the proportion of deleted and inserted equals 0.06.}
\centering
\renewcommand\arraystretch{1.1}
\begin{tabular}{cc|cccc|cccc}
\Xhline{1.2pt}
\multicolumn{2}{c|}{\multirow{2}*{\textbf{Datasets(GA)}} }& \multicolumn{4}{c|}{Precision} &\multicolumn{4}{c}{AUC} \\
\Xcline{3-10}{0.8pt}
\multicolumn{2}{c|}{}&$\alpha = 0$&$\alpha = 0.01$&$\alpha = 0.1$&$\alpha = 1$&$\alpha = 0$&$\alpha = 0.01$&$\alpha = 0.1$&$\alpha = 1$\\
\Xhline{1.2pt}
\multicolumn{2}{c|}{\textbf{Mexican}}&0.0818&\textbf{0.0364}&\textbf{0.0364}&\textbf{0.0364}&\textbf{0.708}&0.740&0.758&0.754\\
\multicolumn{2}{c|}{\textbf{Dolphin}}&0.0667&0.00667&0.00667&\textbf{0}&\textbf{0.698}&0.741&0.749&0.752\\
\multicolumn{2}{c|}{\textbf{Bomb}}&0.379&0.329&\textbf{0.288}&\textbf{0.288}&\textbf{0.878}&0.899&0.908&0.910\\
\multicolumn{2}{c|}{\textbf{Lesmis}}&0.272&0.252&\textbf{0.244}&0.248&\textbf{0.872}&0.879&0.899&0.898\\
\multicolumn{2}{c|}{\textbf{Throne}}&\textbf{0.0943}&0.0971&0.111&0.114&\textbf{0.835}&0.863&0.869&0.876\\
\multicolumn{2}{c|}{\textbf{Jazz}}&\textbf{0.397}&0.422&0.439&0.417&\textbf{0.952}&0.956&0.958&0.958\\
\Xhline{1.2pt}
\multicolumn{2}{c|}{\multirow{2}*{\textbf{Datasets(EDA)}} }& \multicolumn{4}{c|}{Precision} &\multicolumn{4}{c}{AUC} \\
\Xcline{3-10}{0.8pt}
\multicolumn{2}{c|}{}&$\alpha = 0$&$\alpha = 0.01$&$\alpha = 0.1$&$\alpha = 1$&$\alpha = 0$&$\alpha = 0.01$&$\alpha = 0.1$&$\alpha = 1$\\
\Xhline{1.2pt}
\multicolumn{2}{c|}{\textbf{Mexican}}&0.0727&\textbf{0.0273}&\textbf{0.0273}&0.0364&\textbf{0.701}&0.745&0.744&0.744\\
\multicolumn{2}{c|}{\textbf{Dolphin}}&0.0533&0.00667&0.00667&\textbf{0}&\textbf{0.689}&0.736&0.748&0.752\\
\multicolumn{2}{c|}{\textbf{Bomb}}&0.392&0.317&0.225&\textbf{0.129}&\textbf{0.867}&0.897&0.909&0.905\\
\multicolumn{2}{c|}{\textbf{Lesmis}}&0.276&0.224&0.156&\textbf{0.0680}&\textbf{0.859}&0.894&0.889&0.889\\
\multicolumn{2}{c|}{\textbf{Throne}}&0.0886&0.0771&\textbf{0.0314}&0.0343&\textbf{0.816}&0.884&0.879&0.869\\
\multicolumn{2}{c|}{\textbf{Jazz}}&\textbf{0.401}&0.434&0.429&0.425&\textbf{0.951}&0.959&0.957&0.955\\
\Xhline{1.2pt}
\end{tabular}
\label{alpha}
\end{table*}

The final results are shown in Fig.~\ref{fig:PRE} and Fig.~\ref{fig:AUC}, from which it is found that
evolutionary perturbations, especially those obtained by EDA, are superior to RLR, RLS and HP on most
network datasets.
The simulation results also show that the effect of HP is getting better as the proportion of perturbations
increases. Especially, in a larger scale network with a larger validation set $E^V$, HP even outperforms
evolutionary methods when measured in precision. Larger-scale perturbations in larger-scale networks mean
that the search space is larger. Correspondingly, increasing the scale of the evolutionary method, e.g.
increasing the numbers of individuals and iterations, may yield better results. However, time consumption
can also exceed the allowable tolerance. Clearly, improving evolutionary methods or introducing parallel
computing may reduce time consumption, which will be left for future studies.

Moreover, the running time before and after accelerating the fitness calculation has also been compared.
In Fig.~\ref{fig:time}, experiments on GA and EDA with the same parameters in the two different networks
indicate that the computational efficiency is increased significantly after acceleration.

\begin{figure}[!t]
	\centering
	\includegraphics[width=1\linewidth]{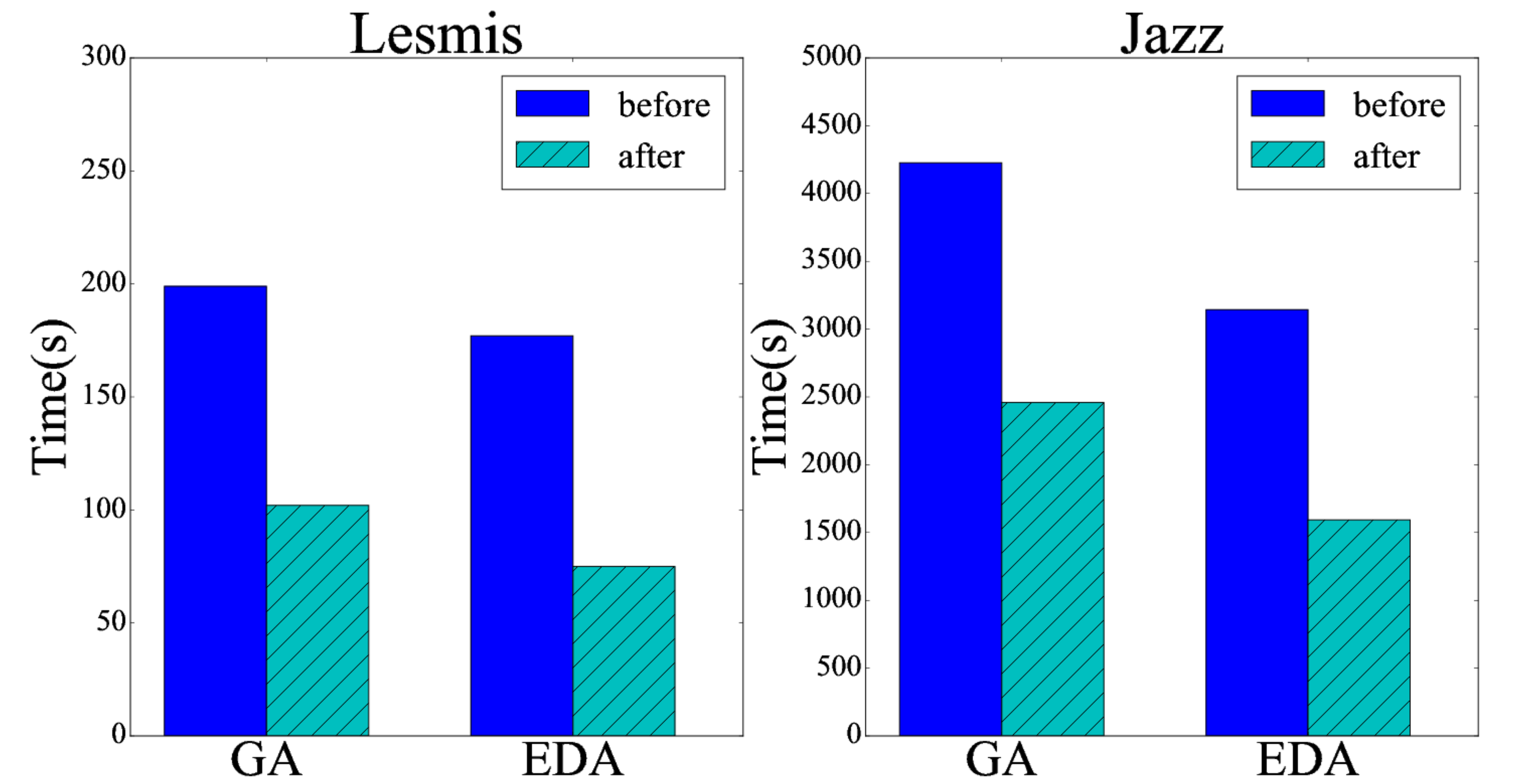}
	\caption{Average running time of GA and EDA with the same parameters in two networks
             before and after accelerating the fitness calculation.}
	\label{fig:time}
\end{figure}

\subsubsection{Transferability of Evolutionary Perturbation}

Although the above algorithms defend against specific link-prediction-based attacks (RA), an adversary
may use other methods to do link prediction. Hence, it is desirable to know the transferability of
the evolutionary perturbation especially acquired from EDA by checking the performance of other link prediction
algorithms. For this purpose, the final perturbed networks obtained with the maximum perturbation
generated by EDA together with HP are retained for each network, for which five different similarity indices are selected
as test indices, which are Common Neighbors (CN), Jaccard, Preferential Attachment (PA), Adamic-Adar (AA)
and Local Path (LP)~\cite{L2010Link}, to calculate the average precision and AUC of the perturbed networks.

The results are summarized in TABLE~\ref{transference}. One can see that the transferability of EDA is
better, no matter whether it is measured by precision or by AUC, comparing to the random perturbation (RLR/RLS) and heuristic perturbation (HP),
in most networks. However, it can also be found that, in some cases, the transfer effect of EDA is inferior
even than RLS when checked by LP, which may be due to the fact that LP considers
the higher-order similarity of node pairs while the fitness of EDA only considers one-hop neighbors,
which limits its transfer effect.

\begin{table*}[!ht]
\caption{Transferability of perturbations generated by HP and EDA measured by different similarity
indices for each network. HP (RA) means HP is based on RA and EDA (RA) means that the fitness of EDA is based on RA as well.}
\centering
\renewcommand\arraystretch{1.1}
\begin{tabular}{c|cccccc|ccccc}
\Xhline{1.2pt}
\multicolumn{2}{c|}{ \multirow{2}*{$\textbf{Mexican}$} }& \multicolumn{5}{c|}{Precision} &\multicolumn{5}{c}{AUC} \\
\Xcline{3-12}{0.8pt}
\multicolumn{2}{c|}{}&original&RLR&RLS&HP(RA)&EDA(RA)&original&RLR&RLS&HP(RA)&EDA(RA)\\
\Xhline{1.2pt}
\multicolumn{2}{c|}{RA}&0.155&0.125&0.128&0.0182&\textbf{0}&0.777&0.734&0.732&0.721&\textbf{0.495}\\
\multicolumn{2}{c|}{CN}&0.118&0.0993&0.106&0.0273&\textbf{0}&0.760&0.720&0.715&0.702&\textbf{0.516}\\
\multicolumn{2}{c|}{Jaccard}&0.109&0.0876&0.0796&0.0273&\textbf{0}&0.752&0.709&0.701&0.693&\textbf{0.496}\\
\multicolumn{2}{c|}{PA}&0.0546&0.0677&0.0596&0.0364&\textbf{0.0300}&0.625&0.616&0.625&0.588&\textbf{0.557}\\
\multicolumn{2}{c|}{AA}&0.127&0.120&0.119&0.0182&\textbf{0}&0.776&0.733&0.731&0.721&\textbf{0.504}\\
\multicolumn{2}{c|}{LP($\alpha=0.5$)}&0.127&0.0896&0.105&0.0545&\textbf{0.0165}&0.729&0.692&0.682&0.684&\textbf{0.564}\\
\Xhline{1.2pt}
\multicolumn{2}{c|}{ \multirow{2}*{$\textbf{Dolphin}$} }& \multicolumn{5}{c|}{Precision} &\multicolumn{5}{c}{AUC} \\
\Xcline{3-12}{0.8pt}
\multicolumn{2}{c|}{}&original&RLR&RLS&HP(RA)&EDA(RA)&original&RLR&RLS&HP(RA)&EDA(RA)\\
\Xhline{1.2pt}
\multicolumn{2}{c|}{RA}&0.107&0.0865&0.0885&0.00667&\textbf{0}&0.765&0.740&0.740&0.701&\textbf{0.629}\\
\multicolumn{2}{c|}{CN}&0.113&0.0953&0.0954&0.0333&\textbf{0}&0.760&0.736&0.736&0.700&\textbf{0.648}\\
\multicolumn{2}{c|}{Jaccard}&0.113&0.0920&0.0973&0.0200&\textbf{0.00455}&0.760&0.737&0.736&0.699&\textbf{0.651}\\
\multicolumn{2}{c|}{PA}&0.0133&0.0121&0.0132&\textbf{0}&\textbf{0}&0.623&0.616&0.623&0.589&\textbf{0.583}\\
\multicolumn{2}{c|}{AA}&0.133&0.0978&0.0999&0.0133&\textbf{0}&0.765&0.741&0.741&0.702&\textbf{0.632}\\
\multicolumn{2}{c|}{LP($\alpha=0.5$)}&0.120&0.107&0.103&0.0533&\textbf{0.0226}&0.792&0.768&0.762&0.772&\textbf{0.711}\\
\Xhline{1.2pt}
\multicolumn{2}{c|}{ \multirow{2}*{$\textbf{Bomb}$} }& \multicolumn{5}{c|}{Precision} &\multicolumn{5}{c}{AUC} \\
\Xcline{3-12}{0.8pt}
\multicolumn{2}{c|}{}&original&RLR&RLS&HP(RA)&EDA(RA)&original&RLR&RLS&HP(RA)&EDA(RA)\\
\Xhline{1.2pt}
\multicolumn{2}{c|}{RA}&0.713&0.395&0.438&0.104&\textbf{0.00182}&0.929&0.909&0.904&0.903&\textbf{0.891}\\
\multicolumn{2}{c|}{CN}&0.571&0.452&0.483&\textbf{0.271}&0.317&0.915&0.897&0.888&0.891&\textbf{0.883}\\
\multicolumn{2}{c|}{Jaccard}&0.475&0.359&0.327&0.292&\textbf{0.274}&0.912&0.894&0.886&0.876&\textbf{0.874}\\
\multicolumn{2}{c|}{PA}&0.229&0.186&0.219&\textbf{0.138}&0.139&0.774&0.767&0.774&0.749&\textbf{0.744}\\
\multicolumn{2}{c|}{AA}&0.658&0.459&0.498&0.200&\textbf{0.197}&0.926&0.907&0.901&0.902&\textbf{0.891}\\
\multicolumn{2}{c|}{LP($\alpha=0.5$)}&0.488&0.379&0.389&0.404&\textbf{0.325}&0.880&0.864&\textbf{0.848}&0.868&0.851\\
\Xhline{1.2pt}
\multicolumn{2}{c|}{ \multirow{2}*{$\textbf{Lesmis}$} }& \multicolumn{5}{c|}{Precision} &\multicolumn{5}{c}{AUC} \\
\Xcline{3-12}{0.8pt}
\multicolumn{2}{c|}{}&original&RLR&RLS&HP(RA)&EDA(RA)&original&RLR&RLS&HP(RA)&EDA(RA)\\
\Xhline{1.2pt}
\multicolumn{2}{c|}{RA}&0.540&0.365&0.378&0.180&\textbf{0.0199}&0.914&0.898&0.896&0.891&\textbf{0.879}\\
\multicolumn{2}{c|}{CN}&0.484&0.359&0.357&0.204&\textbf{0.182}&0.906&0.895&0.888&\textbf{0.881}&\textbf{0.881}\\
\multicolumn{2}{c|}{Jaccard}&0.0360&0.202&\textbf{0.0606}&0.0920&0.114&0.914&0.870&0.859&0.854&\textbf{0.850}\\
\multicolumn{2}{c|}{PA}&0.104&0.0936&0.102&0.0800&\textbf{0.0768}&0.782&0.780&0.782&\textbf{0.768}&0.777\\
\multicolumn{2}{c|}{AA}&0.524&0.380&0.378&0.192&\textbf{0.111}&0.912&0.898&0.893&0.888&\textbf{0.886}\\
\multicolumn{2}{c|}{LP($\alpha=0.5$)}&0.376&0.311&0.300&0.316&\textbf{0.236}&0.875&0.871&\textbf{0.856}&0.867&0.869\\
\Xhline{1.2pt}
\multicolumn{2}{c|}{ \multirow{2}*{$\textbf{Throne}$} }& \multicolumn{5}{c|}{Precision} &\multicolumn{5}{c}{AUC} \\
\Xcline{3-12}{0.8pt}
\multicolumn{2}{c|}{}&original&RLR&RLS&HP(RA)&EDA(RA)&original&RLR&RLS&HP(RA)&EDA(RA)\\
\Xhline{1.2pt}
\multicolumn{2}{c|}{RA}&0.274&0.180&0.198&0.0343&\textbf{0.0291}&0.912&0.878&0.875&0.873&\textbf{0.859}\\
\multicolumn{2}{c|}{CN}&0.200&0.174&0.181&0.109&\textbf{0.103}&0.892&0.865&0.859&0.849&\textbf{0.847}\\
\multicolumn{2}{c|}{Jaccard}&0.0600&0.0553&0.0414&0.0543&\textbf{0.0357}&0.861&0.837&0.825&\textbf{0.813}&\textbf{0.813}\\
\multicolumn{2}{c|}{PA}&0.123&0.110&0.122&0.100&\textbf{0.0929}&0.769&0.767&0.769&0.764&\textbf{0.754}\\
\multicolumn{2}{c|}{AA}&0.240&0.192&0.203&\textbf{0.0486}&0.0814&0.908&0.878&0.873&0.869&\textbf{0.859}\\
\multicolumn{2}{c|}{LP($\alpha=0.5$)}&0.163&0.145&0.157&0.117&\textbf{0.116}&0.867&0.852&\textbf{0.838}&0.860&0.845\\
\Xhline{1.2pt}
\multicolumn{2}{c|}{ \multirow{2}*{$\textbf{Jazz}$} }& \multicolumn{5}{c|}{Precision} &\multicolumn{5}{c}{AUC} \\
\Xcline{3-12}{0.8pt}
\multicolumn{2}{c|}{}&original&RLR&RLS&HP(RA)&EDA(RA)&original&RLR&RLS&HP(RA)&EDA(RA)\\
\Xhline{1.2pt}
\multicolumn{2}{c|}{RA}&0.512&0.390&0.395&\textbf{0.172}&0.327&0.960&0.950&0.948&0.946&\textbf{0.940}\\
\multicolumn{2}{c|}{CN}&0.497&0.382&0.378&\textbf{0.237}&0.335&0.954&0.938&0.930&0.933&\textbf{0.929}\\
\multicolumn{2}{c|}{Jaccard}&0.512&0.390&0.398&\textbf{0.275}&0.342&0.960&0.945&0.942&\textbf{0.926}&0.938\\
\multicolumn{2}{c|}{PA}&0.132&0.121&0.130&\textbf{0.106}&0.112&0.770&0.761&0.769&\textbf{0.746}&0.752\\
\multicolumn{2}{c|}{AA}&0.518&0.391&0.391&\textbf{0.220}&0.336&0.961&0.944&0.938&0.939&\textbf{0.935}\\
\multicolumn{2}{c|}{LP($\alpha=0.5$)}&0.359&0.295&0.281&0.265&\textbf{0.264}&0.908&0.889&\textbf{0.874}&0.902&0.882\\
\Xhline{1.2pt}
\end{tabular}
\label{transference}
\end{table*}

\section{Conclusions and research outlook}\label{sec:4}

In this paper, a target defense algorithm against link-prediction-based attacks is proposed
for social networks. Both heuristic and evolutionary perturbations techniques are used,
taking into account both defense effect and data utility. A special fitness function is
designed, which can simultaneously measure two link prediction indices, i.e. precision and
AUC, and can reduce computation time by calculating variations after perturbations. The
experimental results on six real-world networks show that evolutionary perturbations,
especially those obtained by EDA, outperform other baseline methods, for both precision and
AUC measures, in most cases. Finally, the transfer effect of evolutionary perturbations
generated by EDA is verified, showing that evolutionary perturbations are transferable
and can be used to defend other link-prediction-based attacks when the similarity measure
is closely related to the fitness index.

However, for large-scale networks, especially when the number of links to be hidden is very large,
the proposed algorithm does not take advantage of the network structure, consequently the
evolutionary efficiency is limited. Therefore, improving evolutionary methods or introducing
parallel computing so as to reduce computation time are good topics for future studies.


%

\bibliographystyle{IEEEtran}

\bibliography{REFs}

\end{document}